\renewcommand{\theenumi}{{\rm\roman{enumi}}}
\renewcommand{\labelenumi}{\theenumi)}
\newcommand{\scalar}[2]{\langle#1\,,#2\rangle}
\newcommand{\A}{\mathcal{A}}
\newcommand{\ta}{\tilde{a}}
\newcommand{\te}{\tilde{e}}
\newcommand{\V}{\mathcal{V}}
\renewcommand{\H}{\mathcal{H}}
\newcommand{\CC}{\mathbb{C}}
\newcommand{\R}{\mathbb{R}}
\newcommand{\Phid}{\Phi^\dagger\negthinspace}
\newcommand{\map}[5][\phantom]{
\begin{array}{ccl}
\mathllap{#1{\,:\,}}{#2}&\to&{#3}\\
          {#4}&\mapsto&{#5}
\end{array}
}
\newtheorem{theorem}{Theorem}
\newtheorem{proposition}[theorem]{Proposition}
\theoremstyle{definition}
\newtheorem{definition}[theorem]{Definition}
\theoremstyle{remark}
\numberwithin{equation}{section}
\numberwithin{theorem}{section}
\begin{document}

\title{On Uhlmann's proof of the Monotonicity of the Relative Entropy}

\author{Juan Manuel Pérez-Pardo$^{1,2}$}

\date{}

\maketitle

\begin{enumerate}
\def\oldenumi{\labelenumi}
\renewcommand{\theenumi}{\arabic{enumi}}
\renewcommand{\labelenumi}{$^{\theenumi}$}
\item Universidad Carlos III de Madrid\\
Avda. de la Universidad 30, 28911 Leganés (Madrid), Spain.
\item Instituto de Ciencias Matemáticas (CSIC - UAM - UC3M - UCM)\\ 
Nicolás Cabrera, 13-15, 28049 Cantoblanco (Madrid), Spain.
\def\labelenumi{\oldenumi}
\end{enumerate}

\begin{abstract}
This article presents in a self-contained way A.~Uhlmann's celebrated Theorem of monotonicity of the relative entropy under completely positive and trace preserving maps. The Theorem is presented in its more general form and meaningful examples are given.
\end{abstract}




\section{Previous Remarks}

The main results presented in this article are the construction of the functional calculus of positive Hemitian quadratic forms developed by W.~Pusz and S.L.~Woronowicz and A.~Uhlmann's celebrated Theorem of monotoni\-city of the relative entropy on a \C/ $\A$ under the action of a positive and unital map, i.e. with trace preserving dual, $\Phi:\tilde{\A} \to \A$.\\

My interest in this topic originated during discussions with Balachandran around the years 2018/2019 on the properties of monotonicity of the relative entropy  when considering the subalgebras generated by observables of a quantum subsystem and in connection with previous results obtained by him and coworkers\cite{bal13a,bal13b,bal13c}.

The results presented in this article can be found in the original articles by  W.~Pusz and S.L.~Woronowicz\cite{pusz75} and by A.~Uhlmann\cite{uhlmann77}. A.~Uhlmann's original proof is often perceived as difficult to follow. In my opinion this is so because most of the technical results and constructions depend on the construction of the functional calculus developed by Pusz and Woronowicz. This is the reason why I originally prepared the set of notes that have served as the core of this article and that I have shared many times since. On this occasion I decided to revise the notes and give them a more polished format in the hope that they can be helpful for researchers on these topics in the future.


\section{Functional Calculus of positive quadratic forms}\label{sec:functional_calculus}

In order to proof the monotonicity of the relative entropy for completely positive and trace preserving maps we are going to take Uhlmann's definition of Relative Entropy \cite{uhlmann77}. This is based on interpolations of quadratic forms and uses their properties intensively. The interpolation of quadratic forms relies on a remarkable construction due to W.~Pusz and S.L.~Woronowicz \cite{pusz75} that allows for the definition of a functional calculus of quadratic forms defined on any vector space $\V$. For our purposes this vector space will be in the next sections the \C/ itself.

We shall now give a brief account on Pusz and Woronicz's construction. Let $\V$ be a complex vector space and consider two positive, Hermitian quadratic forms defined on it
$$q, p\colon \V\times\V \to\CC.$$
The null space associated to these quadratic forms is defined as 
$$\mathcal{N} := \{ a \in \V \mid p(a,a) + q(a,a) = 0\}.$$
The set $\mathcal{N}$ is a linear subspace of $\V$. Using an approach analogous to the Gelfand–Naimark–Segal construction one can obtain a representation of the quadratic forms $p$ and $q$ on a Hilbert space. Indeed, the quotient $\V/\mathcal{N}$ defines a pre-Hilbert space with inner product 
$$\scalar{[a]}{[b]} = p(a,b) + q(a,b)\,,\quad [a], [b] \in \V/\mathcal{N},$$
and one can define $\H$ to be the closure of $\V/\mathcal{N}$ with respect to the norm induced by the inner product. Both quadratic forms on $\V$ define canonically quadratic forms on $\H$ which we will keep denoting with the same symbols, i.e.
$$\map[p]{\H\times\H}{\CC}{[a],[b]}{p(a,b),}$$
and analogously for $q$. Moreover, both quadratic forms are continuous with respect to the norm induced by $\scalar{\cdot}{\cdot}$ and by Riesz representation theorem they can be represented by two positive and bounded operators on $\H$. That is, there exist $P,Q\in \mathcal{B}_+(\H)$ such that for all $a,b\in\H$ one has that 
\begin{equation}\label{eq:compatible}
    p(a,b) = \scalar{a}{Pb} \quad \text{and} \quad q(a,b) = \scalar{a}{Qb}.
\end{equation}

Remarkably, these two operators commute. Notice that by the definition of the scalar product and of the operators $P$ and $Q$ one has that $P+Q = \mathbb{I}_\H$. As a pair of commuting, self-adjoint operators one can use them to define a functional calculus of quadratic forms. On their original article Pusz and Woronowicz consider the set of homogeneous measurable and locally bounded functions, but we will not pursue such generality here. Let $f\colon \R_+^2 \to \R$ be a homogeneous continuous function and let $f(P,Q)$ be the positive operator defined by the functional calculus on self-adjoint operators. Define a new quadratic form on $\V$ by the formula
\begin{equation}
    f_{[p,q]}(a,b) := \scalar{[a]}{f(P,Q)[b]}.
\end{equation}
The quadratic form so defined does not depend on the representation chosen. That is, consider that there is another Hilbert space $\H'$ and a pair of positive commuting operators on it $P'$ an $Q'$ representing the forms $p$ and $q$. The quadratic form defined by the functional calculus of the self-adjoint operators $P'$ and $ Q'$ gives rise to the same quadratic form on $\V$. We refer to \cite[Theorem 1.2]{pusz75} for the details of this proof. The particular representation obtained here by means of the GNS like construction is just a convenient way of providing such representation. A triple $(\H,P,Q)$ of a Hilbert space and two positive, bounded, commuting operators $P,Q \in\mathcal{B}_+(\H)$ that satisfy \eqref{eq:compatible} will be called a \textbf{compatible representation for the quadratic forms} $p$ and $q$. 

To exemplify how this construction works we are going to consider a particular example. Let $\hat{P}$ and $\hat{Q}$ be the positive, Hermitian operators on $\mathbb{C}^2$ given by 
$$\hat{P} = 
       \begin{pmatrix}
         2 & 1 \\ 1 & 2
       \end{pmatrix}
       \quad \hat{Q} = 
       \begin{pmatrix}
         2 & i \\ -i & 2
       \end{pmatrix}
       \quad
       [\hat{P},\hat{Q}] = 2i
       \begin{pmatrix}
         -1 & 0 \\ 0 & 1
       \end{pmatrix}
       $$
These two operators do not commute. They define two Hermitian, positive quadratic forms on $\mathbb{C}^2$. That is, for $a,b\in\mathbb{C}^2$
$$q(a,b) = a^+\hat{Q}b\;;\quad p(a,b) = a^+\hat{P}b,$$
where $a^+$ is the Hermitian transpose vector of $a$. The null space is in this case $\mathcal{N}=\{0\}$. This is so since the eigenvalues of the matrix $\hat{Q}+\hat{P}$ are $4+\sqrt{2}>0$. The Hilbert space in which one can define the representation of the forms is $\H = (\mathbb{C}^2, \scalar{\cdot}{\cdot})$ where the scalar product is defined as \hbox{$\scalar{a}{b} = a^+(\hat{P}+\hat{Q})b$}. Since there is no null space, we droped the notation for the equivalence classes.

The quadratic forms $q$ and $p$ can be represented in this Hilbert space. 
$$p(a,b) = a^+\hat{P}b=\scalar{a}{Pb}=a^+(\hat{P}+\hat{Q})Pb,$$
and similarly for $q$. Therefore, the matrices $P$ and $Q$  representing the forms $q$ and $p$  become respectively $P=(\hat{P}+\hat{Q})^{-1}\hat{P}$ and $Q=(\hat{P}+\hat{Q})^{-1}\hat{Q}$. Clearly one has that $P+Q = (\hat{P}+\hat{Q})^{-1}(\hat{P}+\hat{Q})=\mathbb{I}$. Notice also that although the matrices $\hat{P}$ and $\hat{Q}$ do not commute, the matrices $P$ and $Q$ do commute. 
$$[P,Q] = PQ +P^2 - P^2 - QP = P(P+Q) - (P+Q)P = 0.$$


\section{Interpolation of Quadratic Forms and Relative Entropy}\label{sec:interpolation}

In what follows we will consider a particular family of quadratic forms constructed using the functional calculus of quadratic forms.

\begin{definition}\label{def:interpolation}
    Let $p$ and $q$ be positive, Hermitian quadratic forms on a vector pace $\V$ and let $(\H,P,Q)$ be a compatible representation for them. Consider the family of functions $f^t\colon \R_+^2 \to \R$ defined by $f^t(x,y) = x^{1-t}y^t$ for $t\in [0,1]$. The \textbf{interpolation of the quadratic forms} $p$ and $q$ is the family $\{\gamma^t_{[p,q]}\}_{t\in [0,1]}$ of quadratic forms on $\V$ defined by
    $$\gamma^t_{[p,q]}(a,b) := \scalar{[a]}{P^{1-t}Q^t [b]}\,,\quad a,b \in \V.$$
\end{definition}

Notice that one has that $\gamma^0_{[p,q]} = p$ and $\gamma^1_{[p,q]} = q$ and the following interpolation property:
\begin{equation}\label{eq:interpolationprop}
    \gamma^t_{[\gamma^{t_1}_{[p,q]},\gamma^{t_2}_{[p,q]}]} = \gamma^{t'}_{[p,q]}\,,\quad t' = t_1(1-t) + t_2t. 
\end{equation}
This implies that successive interpolations of quadratic forms give rise to quadratic forms on the previous interpolation. We shall be interested in the following element of the interpolation.

\begin{definition}\label{def:quadraticmean}
    Let $\{\gamma^t_{[p,q]}\}_{t\in [0,1]}$ be the interpolation of the quadratic forms $p$ and $q$ on $\V$. The \textbf{geometric mean} of the quadratic forms $p$ and $q$ is the quadratic form $\gamma^{1/2}_{[p,q]}$.
\end{definition}

\begin{definition}
    Let $p$ and $q$ and $r$ be positive, Hermitian quadratic forms on $\V$. We will say that $r$ is \textbf{dominated by} $p$ and $q$ if for all $a,b \in\V$ one has that 
    $$|r(a,b)|^2 \leq p(a,a)q(b,b).$$
\end{definition}

\begin{theorem}{\cite[Theorem 2.1]{pusz75}}\label{thm:gmsup}
    Let $p$ and $q$ be quadratic forms on $\V$ and let $S$ be the space of quadratic forms dominated by $p$ and $q$.   The geometric mean of the forms $p$ and $q$ satisfies
    $$\gamma^{1/2}_{[p,q]}(a,a) = \sup_{r\in S} r(a,a)\,, \quad a\in \V.$$
\end{theorem}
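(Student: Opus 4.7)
Working in a compatible representation $(\H,P,Q)$ of $p$ and $q$, I would establish the two inequalities separately. The easy direction consists of checking that $\gamma^{1/2}_{[p,q]}$ itself belongs to $S$, whence $\sup_{r\in S} r(a,a) \geq \gamma^{1/2}_{[p,q]}(a,a)$. This is a one-line Cauchy--Schwarz computation: rewriting $\gamma^{1/2}_{[p,q]}(a,b) = \scalar{P^{1/2}[a]}{Q^{1/2}[b]}$ and applying the inequality gives $|\gamma^{1/2}_{[p,q]}(a,b)|^2 \leq \scalar{[a]}{P[a]}\scalar{[b]}{Q[b]} = p(a,a)q(b,b)$.

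For the substantive inequality $r(a,a) \leq \gamma^{1/2}_{[p,q]}(a,a)$, I would first represent an arbitrary $r \in S$ by a bounded operator on $\H$: since $|r(a,b)|^2 \leq p(a,a)q(b,b) \leq \scalar{[a]}{[a]}\scalar{[b]}{[b]}$, the form $r$ vanishes on $\mathcal{N}$, descends to $\V/\mathcal{N}$ and extends by continuity to $\H$. Riesz representation together with the positivity and Hermitianness of $r$ then produces $R \in \B_+(\H)$ with $r([a],[b]) = \scalar{[a]}{R[b]}$, and the domination hypothesis becomes the operator-theoretic statement $|\scalar{u}{Rv}|^2 \leq \scalar{u}{Pu}\scalar{v}{Qv}$ for every $u,v\in\H$.

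The heart of the argument is to promote this bilinear bound to the operator inequality $R \leq P^{1/2}Q^{1/2}$. Since the compatible representation need not render $P$ or $Q$ invertible, I would regularise by passing to $P+\epsilon\mathbb{I}$ and $Q+\epsilon\mathbb{I}$, which remain commuting, positive, and continue to dominate $r$. The bilinear bound rearranges to $R(Q+\epsilon\mathbb{I})^{-1}R \leq P+\epsilon\mathbb{I}$, and conjugation by $(Q+\epsilon\mathbb{I})^{-1/2}$ together with the commutativity of $P$ and $Q$ yields
\[
\bigl((Q+\epsilon\mathbb{I})^{-1/2}R(Q+\epsilon\mathbb{I})^{-1/2}\bigr)^{2} \;\leq\; (P+\epsilon\mathbb{I})(Q+\epsilon\mathbb{I})^{-1}.
\]
The left-hand side is the square of the positive operator $S:=(Q+\epsilon\mathbb{I})^{-1/2}R(Q+\epsilon\mathbb{I})^{-1/2}$, so operator monotonicity of $t\mapsto t^{1/2}$ (L\"owner--Heinz) together with the joint functional calculus of $(P,Q)$ gives $S \leq (P+\epsilon\mathbb{I})^{1/2}(Q+\epsilon\mathbb{I})^{-1/2}$; conjugating back by $(Q+\epsilon\mathbb{I})^{1/2}$ produces $R \leq (P+\epsilon\mathbb{I})^{1/2}(Q+\epsilon\mathbb{I})^{1/2}$, and letting $\epsilon \downarrow 0$ finishes the operator inequality. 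Evaluating on the diagonal yields the desired bound on $r(a,a)$.

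The main obstacle is precisely this operator-level step: without the commutativity of $P$ and $Q$ one cannot combine their square roots so cleanly, and without the operator monotonicity of $\sqrt{\,\cdot\,}$ a direct Cauchy--Schwarz only delivers the strictly weaker bound $r(a,a) \leq \sqrt{p(a,a)q(a,a)}$, which in general exceeds $\gamma^{1/2}_{[p,q]}(a,a)$. The $\epsilon$-regularisation is routine but genuinely needed, since $P$ or $Q$ may well be singular in a given compatible representation.
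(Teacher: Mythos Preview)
The paper does not actually prove this theorem: it is stated with a citation to \cite[Theorem~2.1]{pusz75} and used as a black box, so there is nothing in the paper to compare your argument against.

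That said, your plan is sound and is essentially the original Pusz--Woronowicz proof. The Cauchy--Schwarz verification that $\gamma^{1/2}_{[p,q]}\in S$ is correct, the passage from the domination bound to a bounded positive operator $R$ on $\H$ is the right reduction, and the chain
\[
R(Q+\epsilon\mathbb{I})^{-1}R \leq P+\epsilon\mathbb{I}
\;\Longrightarrow\;
\bigl((Q+\epsilon\mathbb{I})^{-1/2}R(Q+\epsilon\mathbb{I})^{-1/2}\bigr)^{2}\leq (P+\epsilon\mathbb{I})(Q+\epsilon\mathbb{I})^{-1}
\;\Longrightarrow\;
R\leq (P+\epsilon\mathbb{I})^{1/2}(Q+\epsilon\mathbb{I})^{1/2}
\]
via L\"owner--Heinz and commutativity of $P,Q$ is exactly how that paper proceeds. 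The only step you leave implicit is how the bilinear bound yields $R(Q+\epsilon\mathbb{I})^{-1}R \leq P+\epsilon\mathbb{I}$; this follows by substituting $v=(Q+\epsilon\mathbb{I})^{-1}Ru$ into $|\scalar{u}{Rv}|^{2}\leq \scalar{u}{(P+\epsilon\mathbb{I})u}\scalar{v}{(Q+\epsilon\mathbb{I})v}$ and cancelling, which you may want to spell out. Your remarks on why commutativity and operator monotonicity of the square root are indispensable are accurate.
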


\begin{proposition}\label{prop:1}
    Let, $p$, $p'$, $q$, and $q'$ be quadratic forms on $\V$ such that for all $a\in\V$ one has that $p(a,a)\geq p'(a,a)$ and $q(a,a) \geq q'(a,a)$. Then their respective geometric means satisfy
    $$\gamma^{1/2}_{[p,q]}(a,a) \geq \gamma^{1/2}_{[p',q']}(a,a)\,, \quad a\in \V. $$
\end{proposition}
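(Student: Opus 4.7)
The plan is to reduce the inequality to the variational characterization of the geometric mean provided by Theorem~\ref{thm:gmsup}. That theorem expresses $\gamma^{1/2}_{[p,q]}(a,a)$ as the supremum of $r(a,a)$ over the set $S$ of quadratic forms on $\V$ dominated by $p$ and $q$, and similarly $\gamma^{1/2}_{[p',q']}(a,a) = \sup_{r\in S'} r(a,a)$ where $S'$ is the set of forms dominated by $p'$ and $q'$. The whole proposition then follows if we establish the inclusion $S'\subseteq S$, since the supremum of a smaller set is at most the supremum of a larger one.

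The inclusion is immediate from the hypothesis. Indeed, if $r\in S'$, then by definition $|r(a,b)|^2 \leq p'(a,a)\,q'(b,b)$ for every $a,b\in\V$. Using $p(a,a)\geq p'(a,a)\geq 0$ and $q(b,b)\geq q'(b,b)\geq 0$, which are valid because all four forms are positive, we can chain the inequalities to obtain
\begin{equation*}
|r(a,b)|^2 \leq p'(a,a)\,q'(b,b) \leq p(a,a)\,q(b,b),
\end{equation*}
so $r$ is dominated by $p$ and $q$, i.e.\ $r\in S$.

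Combining the two observations, for every $a\in\V$ we obtain
\begin{equation*}
\gamma^{1/2}_{[p',q']}(a,a) = \sup_{r\in S'} r(a,a) \leq \sup_{r\in S} r(a,a) = \gamma^{1/2}_{[p,q]}(a,a),
\end{equation*}
which is the claim.

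There is essentially no analytical obstacle in this argument: the work has already been absorbed into Theorem~\ref{thm:gmsup}. The only subtle point worth flagging is that the monotonicity hypothesis is stated only on the diagonal ($p(a,a)\geq p'(a,a)$), but this is precisely what the domination condition requires, so no polarization or additional Cauchy--Schwarz manipulation is needed. Everything else is bookkeeping.
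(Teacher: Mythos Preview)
Your proof is correct and follows the same strategy as the paper: both arguments reduce the claim to the variational characterization of the geometric mean in Theorem~\ref{thm:gmsup}. The only difference is cosmetic: the paper singles out the particular form $\gamma^{1/2}_{[p',q']}$, uses Cauchy--Schwarz to check that it is dominated by $p'$ and $q'$, and then by $p$ and $q$, whereas you establish the set inclusion $S'\subseteq S$ directly and compare suprema---your route is marginally cleaner since it sidesteps the Cauchy--Schwarz step entirely.
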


\begin{proof}
From the definition of geometric mean and the Cauchy-Schwarz inequality it follows that the geometric mean of the quadratic forms $p'$ and $q'$ is dominated by $p'$ and $q'$ and therefore we have for all $a,b\in\V$ that
$$\gamma^{1/2}_{[p',q']}(a,b) \leq p'(a,a)q'(b,b) \leq  p(a,a)q(b,b).$$
This shows that $\gamma^{1/2}_{[p',q']}$ is dominated by $p$ and $q$ and therefore Theorem~\ref{thm:gmsup} implies that 
$$\gamma^{1/2}_{[p',q']}(a,a) \leq \gamma^{1/2}_{[p,q]}(a,a)\,,\quad a\in\V,$$
as we wanted to show.
\end{proof}

We will show now that this property extends to the full interpolation.

\begin{proposition}\label{prop:2}
    Let $\gamma^t_{[p,q]}$ and $\gamma^t_{[p',q']}$ be quadratic interpolations such that for all $a\in\V$ one has that $p(a,a)\geq p'(a,a)$ and $q(a,a) \geq q'(a,a)$. Then 
    $$\gamma^t_{[p,q]}(a,a) \geq \gamma^t_{[p',q']}(a,a)\,,\quad t\in[0,1],\;a\in\V.$$
\end{proposition}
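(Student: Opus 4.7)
The plan is to leverage Proposition~\ref{prop:1} together with the interpolation property \eqref{eq:interpolationprop} to propagate the desired inequality from the endpoints $t=0,1$ (where it holds by hypothesis) through all dyadic rationals in $[0,1]$, and then to extend by continuity in $t$.

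\textbf{Step 1 (endpoints and midpoint).} At $t=0$ we have $\gamma^0_{[p,q]}=p\geq p'=\gamma^0_{[p',q']}$ and at $t=1$ we have $\gamma^1_{[p,q]}=q\geq q'=\gamma^1_{[p',q']}$, both on the diagonal. At $t=1/2$, Proposition~\ref{prop:1} gives the inequality directly.

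\textbf{Step 2 (inductive step on dyadic rationals).} Suppose we have already established that $\gamma^{t_1}_{[p,q]}(a,a)\geq \gamma^{t_1}_{[p',q']}(a,a)$ and $\gamma^{t_2}_{[p,q]}(a,a)\geq \gamma^{t_2}_{[p',q']}(a,a)$ for all $a\in\V$, for some $t_1,t_2\in[0,1]$. Apply Proposition~\ref{prop:1} with the pairs $(\gamma^{t_1}_{[p,q]},\gamma^{t_2}_{[p,q]})$ and $(\gamma^{t_1}_{[p',q']},\gamma^{t_2}_{[p',q']})$ to conclude
$$\gamma^{1/2}_{[\gamma^{t_1}_{[p,q]},\gamma^{t_2}_{[p,q]}]}(a,a)\ \geq\ \gamma^{1/2}_{[\gamma^{t_1}_{[p',q']},\gamma^{t_2}_{[p',q']}]}(a,a),$$
and then invoke the interpolation identity \eqref{eq:interpolationprop} with $t=1/2$ to rewrite both sides as $\gamma^{(t_1+t_2)/2}_{[p,q]}(a,a)$ and $\gamma^{(t_1+t_2)/2}_{[p',q']}(a,a)$ respectively. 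Starting from the three points $\{0,1/2,1\}$ and iterating the midpoint construction, induction on $n$ yields the inequality at every dyadic rational $k/2^n\in[0,1]$.

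\textbf{Step 3 (extension by continuity).} For fixed $a\in\V$, consider the map $t\mapsto\gamma^t_{[p,q]}(a,a)=\scalar{[a]}{P^{1-t}Q^t[a]}$. Since $P,Q$ are bounded positive commuting operators on $\H$ and the function $(x,y,t)\mapsto x^{1-t}y^t$ is continuous on $[0,\|P\|]\times[0,\|Q\|]\times[0,1]$, the joint functional calculus yields that $t\mapsto P^{1-t}Q^t$ is continuous in operator norm, hence $t\mapsto\gamma^t_{[p,q]}(a,a)$ is continuous on $[0,1]$, and similarly for the primed forms. The inequality on the dense set of dyadic rationals therefore passes to all $t\in[0,1]$.

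The main obstacle is purely notational rather than conceptual: one must carefully keep track of the change of ``base pair'' in the interpolation identity \eqref{eq:interpolationprop} when iterating geometric means, so that the induction produces exactly the midpoint $(t_1+t_2)/2$ rather than some other affine combination. Once this bookkeeping is clean, the continuity extension in Step~3 is routine given the boundedness of $P$ and $Q$.
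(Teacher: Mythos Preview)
Your proof is correct and follows the same route as the paper: iterate Proposition~\ref{prop:1} through the interpolation identity~\eqref{eq:interpolationprop} to obtain the inequality at all dyadic rationals, then extend by continuity in $t$ (the paper simply asserts this continuity, while you justify it via the functional calculus). One minor caveat on Step~3: the map $(x,y,t)\mapsto x^{1-t}y^t$ is not jointly continuous at the corners $(0,y,1)$ and $(x,0,0)$ under the convention $0^0=1$, so norm-continuity of $t\mapsto P^{1-t}Q^t$ can fail at $t=0,1$ when $0\in\sigma(P)$ or $0\in\sigma(Q)$---but this is harmless, since the endpoint inequalities hold by hypothesis and continuity on $(0,1)$ suffices.
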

\begin{proof}
    Having into account that $\gamma^0_{[p,q]} = p$ and $\gamma^1_{[p,q]} = q$ and respectively for the quadratic interpolation of $p'$ and $q'$, Proposition~\ref{prop:1} shows the result for $t=1/2$. From the interpolation property, see Eq.~\eqref{eq:interpolationprop}, and the definition of geometric mean we have that
    $$\gamma^{1/2}_{[\gamma^{t_1}_{[p,q]},\gamma^{t_2}_{[p,q]}]} = \gamma^{(t_1+t_2)/2}_{[p,q]}\,,\quad t_1,t_2 \in [0,1].$$ Therefore we can iteratively proof the result for the middle point of all the successive bisections of the interval, which is a dense subset. Notice that the functions $t\mapsto \gamma^t_{[p,q]}(a,a)$ and $t\mapsto \gamma^t_{[p',q']}(a,a)$, $a\in\V$, are continuous. Let $t_0 \in [0,1]$. For every $\epsilon>0$ there exists an interval $[s_1,s_2]$, containing $t_0$, such that $\gamma^{s_1}_{[p,q]}(a,a) \geq \gamma^{s_1}_{[p',q']}(a,a)$,  $|\gamma^{s_1}_{[p,q]}(a,a) - \gamma^{t_0}_{[p,q]}(a,a)| <\epsilon$ and $|\gamma^{s_1}_{[p',q']}(a,a) - \gamma^{t_0}_{[p',q']}(a,a)| <\epsilon$. Hence we have
    $$\gamma^{t_0}_{[p,q]}(a,a) + \epsilon \geq  \gamma^{s_1}_{[p,q]}(a,a) \geq \gamma^{s_1}_{[p',q']}(a,a) \geq \gamma^{t_0}_{[p',q']}(a,a) - \epsilon.$$
    Since this is true for every epsilon the proof is complete. 
    \end{proof}

Consider now that $\V$ and $\V'$ are two vector spaces and let $\Phi \colon \V' \to \V$ be a linear mapping. Given a quadratic form $p: \V \times \V \to \CC$ the linear map $\Phi$ induces a new quadratic form $\Phid p$ on $\V' $ by pull-back:
$$\map[\Phid p]{\V'\times \V'}{\CC}{a',b'}{p(\phi(a'),\phi(b'))}.$$

\begin{proposition}\label{prop:3}
    Let $\V$ and $\V'$ be two linear vector spaces and let $\Phi \colon \V' \to \V$ be a liner map. Let $p$ and $q$ be positive, Hermitian quadratic forms over $\V$. Then the pull-backs of the quadratic interpolations satisfy
    $$\Phid \gamma^t_{[p,q]}(a',a') \leq \gamma^t_{[\Phid p, \Phid q]}(a', a')\,,\quad a' \in \V'.$$
\end{proposition}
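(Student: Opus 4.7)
My plan is to mirror the bisection-plus-continuity scheme already used in the proof of Proposition~\ref{prop:2}, reducing everything to the geometric-mean case $t=1/2$ and then interpolating. For $t=1/2$, the geometric mean $\gamma^{1/2}_{[p,q]}$ on $\V$ is dominated by $p$ and $q$ (as noted in the proof of Proposition~\ref{prop:1}). Pulling back along $\Phi$ preserves this domination:
\begin{equation*}
|\Phid\gamma^{1/2}_{[p,q]}(a',b')|^2 = |\gamma^{1/2}_{[p,q]}(\Phi a',\Phi b')|^2 \leq p(\Phi a',\Phi a')\,q(\Phi b',\Phi b') = \Phid p(a',a')\,\Phid q(b',b'),
\end{equation*}
so $\Phid\gamma^{1/2}_{[p,q]}$ is a quadratic form on $\V'$ dominated by $\Phid p$ and $\Phid q$. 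Theorem~\ref{thm:gmsup} then immediately yields $\Phid\gamma^{1/2}_{[p,q]}(a',a') \leq \gamma^{1/2}_{[\Phid p,\Phid q]}(a',a')$.

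Next I would extend from $t=1/2$ to all dyadic rationals by induction on the bisection depth. The endpoints $t=0$ and $t=1$ are equalities, since $\gamma^0_{[p,q]}=p$, $\gamma^1_{[p,q]}=q$ and pull-back is tautologically compatible with the definition of $\Phid$. Assuming the inequality holds at $t_1$ and $t_2$, the interpolation identity~\eqref{eq:interpolationprop} rewrites the midpoint as $\gamma^{(t_1+t_2)/2}_{[p,q]} = \gamma^{1/2}_{[\gamma^{t_1}_{[p,q]},\gamma^{t_2}_{[p,q]}]}$, and analogously on $\V'$. Applying the $t=1/2$ step to the pair $\gamma^{t_1}_{[p,q]},\gamma^{t_2}_{[p,q]}$ gives
\begin{equation*}
\Phid\gamma^{(t_1+t_2)/2}_{[p,q]}(a',a') \leq \gamma^{1/2}_{[\Phid\gamma^{t_1}_{[p,q]},\Phid\gamma^{t_2}_{[p,q]}]}(a',a'),
\end{equation*}
and the inductive hypothesis combined with the monotonicity provided by Proposition~\ref{prop:2} upgrades the right-hand side to $\gamma^{1/2}_{[\gamma^{t_1}_{[\Phid p,\Phid q]},\gamma^{t_2}_{[\Phid p,\Phid q]}]}(a',a') = \gamma^{(t_1+t_2)/2}_{[\Phid p,\Phid q]}(a',a')$, as required.

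Finally, since the dyadic rationals are dense in $[0,1]$ and both $t\mapsto \Phid\gamma^t_{[p,q]}(a',a')$ and $t\mapsto \gamma^t_{[\Phid p,\Phid q]}(a',a')$ are continuous (they come from the continuous functional calculus $t\mapsto P^{1-t}Q^t$ paired against a fixed vector), the inequality passes from the dense dyadic set to every $t\in[0,1]$ by the same $\epsilon$-argument used at the end of the proof of Proposition~\ref{prop:2}. The genuinely substantive step is the geometric-mean case; once the Cauchy--Schwarz-type domination has been transported through $\Phi$, everything else is the bisection-and-continuity bookkeeping already in place.
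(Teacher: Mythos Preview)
Your proof is correct and follows essentially the same route as the paper's: establish the $t=1/2$ case by pulling back the domination inequality and invoking Theorem~\ref{thm:gmsup}, note that $t=0,1$ are trivial, and then extend to all $t\in[0,1]$ by bisection and continuity as in Proposition~\ref{prop:2}. Your inductive step is actually spelled out more carefully than the paper's (which simply points back to the end of the proof of Proposition~\ref{prop:2}); in particular, you make explicit the two-stage move---first apply the $t=1/2$ result to the pair $(\gamma^{t_1}_{[p,q]},\gamma^{t_2}_{[p,q]})$, then use the inductive hypothesis together with the monotonicity of the geometric mean (Proposition~\ref{prop:1} would suffice here, though Proposition~\ref{prop:2} is fine) to reach $\gamma^{(t_1+t_2)/2}_{[\Phid p,\Phid q]}$---which the paper leaves implicit.
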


\begin{proof}
    Let $r$ be a positive, Hermitian quadratic form dominated by $p$ and $q$ and notice that the inequality $|r(a,b)|^2 \leq p(a,a)q(b,b)$, $a,b\in \V$ implies that $$|\Phid r(a',b')|^2 \leq \Phid p(a',a') \Phid q(b',b')\,, \quad a',b'\in \V'.$$ By the maximality property of the geometric mean, Theorem~\ref{thm:gmsup}, one has that for every such $r$
    $$\Phid r(a',a') \leq \gamma^{1/2}_{[\Phid p,\Phid q]}(a',a').$$
    In particular this holds for $r= \gamma^{1/2}_{[p,q]}$ and this proves the statement for $t=1/2$. The cases $t=0$ and $t=1$ are trivial. Repeating the final part of the argument in the proof of Proposition~\ref{prop:2} finishes the proof. \end{proof}

We are now ready to define the relative entropy between two states on a unital \C/. This is the definition originally introduced by A. Uhlmann \cite{uhlmann77}. In this section and for the rest of the article the generic vector spaces $\V$ of the previous section are going to be the provided by the vector space structure of the \C/. A state $\omega$ on a unital \C/ $\A$ is a real, normalised, positive, linear functional $\omega\colon \A\to\CC$, i.e., for $\lambda,\mu \in\CC$, $a,b \in \A$ and $e\in\A$ the identity element in the algebra
\begin{enumerate}
    \item $\omega(\lambda a + \mu b) = \lambda\omega(a) +\mu\omega(b)$
    \item $\omega(a^*a) \geq 0$ 
    \item $\omega(a^*) = \overline{\omega(a)}$
    \item $\omega(e) = 1$
\end{enumerate}
In terms of any state on a \C/, one can define two Hermitian, positive quadratic forms as follows.
$$\omega^R(a,b) = \omega(ba^*)\,,\quad a,b\in\A$$
$$\omega^L(a,b) = \omega(a^*b)\,,\quad a,b\in\A$$

\begin{definition}
Let $\omega$, $\nu$ be two states on a \C/ $\A$. And let $\gamma^t_{[\omega^R,\nu^L]}$ be the quadratic interpolation of the forms $\omega^R$ and $\nu^L$. The \textbf{relative entropy functional} between the states $\omega$, $\nu$ is defined by
$$S_{[\omega,\nu]}(a,b) = -\liminf_{t\to0^+}\frac{1}{t}\left( \gamma^t_{[\omega^R,\nu^L]}(a,b) - \omega^R(a,b) \right)\,,\quad a,b\in\A.$$
\end{definition}

\begin{definition}
Let $\omega$, $\nu$ be two states on a \C/ $\A$. The \textbf{relative entropy} between the states $\omega$ and $\nu$ is the evaluation on the identity of the relative entropy functional:
$$S[\omega,\nu] = S_{[\omega,\nu]}(e,e).$$
\end{definition}

Next we are going to provide an example with the connection of this definition of relative entropy with von Neumann's. Suppose that we are given two density matrices $\hat{\omega},\hat{\nu}:\mathbb{C}^n\to\mathbb{C}^n$, that is, positive, Hermitian matrices with trace one. For simplicity we are going to assume that they are strictly positive definite. They define respectively, by means of the trace, two linear functionals $\omega$, $\nu$ on $\A=M(\mathbb{C}^n)$. That is, for $a\in M(\mathbb{C}^n)$ 

$$\omega(a) = \operatorname{Tr}(\hat{\omega}a)\;,\quad\nu(a) = \operatorname{Tr}(\hat{\nu}a).$$
The quadratic forms associated to them are therefore:
$$\omega^R(a,b) = \operatorname{Tr}(\hat{\omega}ba^*)\;,\quad \nu^L(a,b) = \operatorname{Tr}(\hat{\nu}a^*b)\;,\quad a,b\in M(\mathbb{C}^n)$$

We are going to obtain a compatible representation, cf. Section~\ref{sec:functional_calculus}, for these quadratic forms on $\H=M(\mathbb{C}^n)$ with Hilbert-Schmidt scalar product,
$$\scalar{a}{b} = \operatorname{Tr}(a^*b).$$
We have that
$$\omega^R(a,b) = \operatorname{Tr}(\hat{\omega}ba^*)=\operatorname{Tr}(a^*\hat{\omega}b)=\scalar{a}{\hat{\omega}b} = \scalar{a}{L_{\omega}b},$$
where $L_\omega$ is the operator of left-multiplication on the algebra. Equivalently we have:
$$\nu^L(a,b) = \operatorname{Tr}(\hat{\nu}a^*b)=\operatorname{Tr}(a^*b\hat{\nu})=\scalar{a}{b\hat{\nu}} = \scalar{a}{R_{\nu}b},$$
where $R_\nu$ is the operator of right-multiplication on the algebra. Notice that $L_\omega$ and $R_\nu$ are self-adjoint, commuting operators and therefore define a compatible representation for the forms $\omega$ and $\nu$. The interpolation of quadratic forms becomes in this case
$$\gamma^t_{[\omega^R,\nu^L]}(a,b) = \operatorname{Tr}(a^* L_\omega^{1-t}R_\nu^tb) = \operatorname{Tr}(a^* \hat{\omega}^{1-t}b\hat{\nu}^t).$$

Applying now the definition of the relative entropy we get
$$S[\omega,\nu] = -\liminf_{t\to0} \frac{1}{t}\left( \operatorname{Tr}(\hat{\omega}^{1-t}\hat{\nu}^{t} - \operatorname{Tr}(\hat{\omega})) \right).$$
This limit is minus the derivative of the first summand evaluated at zero, if it exists. A straightforward calculation shows that 
$$S[\omega,\nu] = -\operatorname{Tr}(\hat{\omega}\ln\hat{\nu}) + \operatorname{Tr}(\hat{\omega}\ln\hat{\omega}),$$
which is the expression of von Neumann's relative entropy.


\section{Proof of the monotonicity of the Relative Entropy}\label{sec:Monotonicity}

We are going to prove in this section the monotonicity of the relative entropy under completely positive and unital (with trace preserving dual) maps. In fact, the prove is slightly more general, and proves monotonicity under Schwarz maps. Let ${\Phi}:\tilde{\A} \to \A$ be a linear map between $*$-algebras with the following properties:

\begin{itemize}
  \item $\Phi(a^*) = \Phi(a)^*$
  \item $\Phi(a^*)\Phi(a)\leq\Phi(a^*a)$
\end{itemize}

Given a state defined on $\A$, $\omega:\A\to\mathbb{C}$, the map $\Phi$ induces a state in $\tilde{A}$ in the following way: 

$$\map[\omega_{\Phi}]{\tilde{A}}{\mathbb{C}}{\tilde{a}}{\omega(\Phi(\tilde{a}))}$$
This state satisfies:
$$\omega_\Phi(\ta^*\ta) = \omega(\Phi(\ta^*\ta)) \geq \omega(\Phi(\ta)^*\Phi(\ta))\geq 0,$$
which proves that a Schwarz map is a positive map. It is a known result that a completely positive map that is trace non-increasing, i.e. $e_{{\A}}\geq \Phi(e_{\tilde{\A}})$, is Schwarz. We will prove the monotonicity of the relative entropy under Schwarz maps that are trace preserving.

\begin{theorem}[Monotonicity of the relative entropy]
Let $\tilde{\A}$ and $\A$ be unital $C^*$-algebras and $\Phi:\tilde{\A}\to\A$ be a unital (with trace preserving dual) Schwarz map. Let $\omega,\nu$ be positive, linear functionals on $\A$. Then 
$$S[\omega,\nu] \geq S[\omega_\Phi,\nu_\Phi]$$
\end{theorem}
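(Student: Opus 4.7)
The plan is to reduce the theorem to a pointwise inequality between the two quadratic interpolations evaluated at the unit, and then chain Proposition~\ref{prop:3} with Proposition~\ref{prop:2}, using the Schwarz hypothesis as the bridge. The key preliminary observation is that, on the diagonal,
$$\Phid \omega^R(\tilde{a},\tilde{a}) \leq \omega^R_\Phi(\tilde{a},\tilde{a})\quad\text{and}\quad \Phid \nu^L(\tilde{a},\tilde{a}) \leq \nu^L_\Phi(\tilde{a},\tilde{a}),\qquad \tilde{a}\in\tilde{\A}.$$
For the first, I would unpack
$\Phid\omega^R(\tilde{a},\tilde{a})=\omega\big(\Phi(\tilde{a})\Phi(\tilde{a})^*\big)=\omega\big(\Phi(\tilde{a})\Phi(\tilde{a}^*)\big)$,
apply the Schwarz inequality in the form $\Phi(\tilde{a})\Phi(\tilde{a}^*)\leq\Phi(\tilde{a}\tilde{a}^*)$ (obtained from the stated inequality by substituting $\tilde{a}\mapsto\tilde{a}^*$), and use positivity of $\omega$ to conclude $\Phid\omega^R(\tilde{a},\tilde{a})\leq\omega(\Phi(\tilde{a}\tilde{a}^*))=\omega^R_\Phi(\tilde{a},\tilde{a})$. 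The inequality for $\nu^L$ is analogous.

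With this bridge established, I would chain the two propositions. Proposition~\ref{prop:3}, applied with $p=\omega^R$ and $q=\nu^L$, gives
$$\gamma^t_{[\omega^R,\nu^L]}\big(\Phi(\tilde{a}),\Phi(\tilde{a})\big)=\Phid\gamma^t_{[\omega^R,\nu^L]}(\tilde{a},\tilde{a})\leq \gamma^t_{[\Phid\omega^R,\Phid\nu^L]}(\tilde{a},\tilde{a}).$$
The diagonal inequalities above feed directly into the hypothesis of Proposition~\ref{prop:2}, with the unprimed forms taken to be $\omega^R_\Phi$ and $\nu^L_\Phi$ and the primed ones $\Phid\omega^R$ and $\Phid\nu^L$, yielding
$$\gamma^t_{[\Phid\omega^R,\Phid\nu^L]}(\tilde{a},\tilde{a})\leq\gamma^t_{[\omega^R_\Phi,\nu^L_\Phi]}(\tilde{a},\tilde{a}).$$
Combining the two displays gives the pointwise estimate $\gamma^t_{[\omega^R,\nu^L]}(\Phi(\tilde{a}),\Phi(\tilde{a}))\leq\gamma^t_{[\omega^R_\Phi,\nu^L_\Phi]}(\tilde{a},\tilde{a})$ for every $t\in[0,1]$.

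To finish, I would specialise to $\tilde{a}=e_{\tilde{\A}}$. Unitality of $\Phi$ gives $\Phi(e_{\tilde{\A}})=e_\A$, and therefore $\omega^R(e_\A,e_\A)=\omega(e_\A)=\omega_\Phi(e_{\tilde{\A}})=\omega^R_\Phi(e_{\tilde{\A}},e_{\tilde{\A}})$. Subtracting this common value from both sides of the chained estimate at $\tilde{a}=e_{\tilde{\A}}$ and dividing by $t>0$ preserves the direction of the inequality. Taking $\liminf_{t\to0^+}$ and negating yields $S[\omega,\nu]\geq S[\omega_\Phi,\nu_\Phi]$.

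The proof is essentially a bookkeeping exercise: all the genuinely analytic content (the Pusz–Woronowicz functional calculus, the Cauchy–Schwarz maximality in Theorem~\ref{thm:gmsup}, and the bisection/continuity argument in Proposition~\ref{prop:2}) has already been absorbed into the propositions of the previous section, and the Schwarz hypothesis on $\Phi$ is used in exactly one place, namely the bridge inequality. The only step requiring real attention is tracking the correct direction of inequality through the $-\liminf$ and verifying that the additive constant subtracted at $t=0$ is genuinely the same on both sides, which is where unitality of $\Phi$ enters in an essential way.
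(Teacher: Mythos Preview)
Your proposal is correct and follows essentially the same route as the paper: establish the bridge inequalities $\Phid\omega^R\leq\omega^R_\Phi$ and $\Phid\nu^L\leq\nu^L_\Phi$ via the Schwarz property, chain Proposition~\ref{prop:3} and Proposition~\ref{prop:2} to obtain the pointwise interpolation inequality, and then specialise to the unit using unitality of $\Phi$ to match the $t=0$ constants. If anything, your write-up is slightly more careful than the paper's in making explicit the substitution $\tilde a\mapsto\tilde a^*$ in the Schwarz inequality and in tracking the direction of the inequality through the $-\liminf$.
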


Before we proceed with the proof it is important to remark that \hbox{$\Phi^\dagger\omega^R\neq \omega_\Phi^R$}. Indeed, on one hand we have for $\ta\in\tilde{\A}$ that
$$\Phi^\dagger\omega^R(\ta,\ta) = \omega^R(\Phi(\ta),\Phi(\ta)) = \omega(\Phi(a)\Phi(a)^*).$$
On the other hand
$$\omega_\Phi^R(\ta,\ta) = \omega_\Phi(\ta\ta^*)=\omega(\Phi(aa^*)).$$
Hence, only in the case that $\Phi$ is a $*$-homomorphism we get the equality.

\begin{proof}
Let $e$ and $\te$ be the identity elements in $\A$ and $\tilde{\A}$ respectively. By definition we have that 
$$S[\omega,\nu]  = -\liminf_{t\to0^+}\frac{1}{t}\left( \gamma^t_{[\omega^R,\nu^L]}(e,e) - \omega^R(e,e) \right)$$
$$S[\omega_\Phi,\nu_\Phi]  = -\liminf_{t\to0^+}\frac{1}{t}\left( \gamma^t_{[\omega_\Phi^R,\nu_\Phi^L]}(\te,\te) - \omega_\Phi^R(\te,\te) \right)$$
Since the map is unital we have the following
$$\omega^R(e,e) = \omega(ee^*) = \omega(e) = \omega(\Phi(\te)) =  \omega(\Phi(\te\te^*)) = \omega_\Phi(\te\te^*) = \omega_\Phi^R(\te,\te).$$

Therefore, it is enough to show that 
$$\gamma^t_{[\omega^R,\nu^L]}(e,e) = \gamma^t_{[\omega^R,\nu^L]}(\Phi(\te),\Phi(\te)) \leq \gamma^t_{[\omega_\Phi^R,\nu_\Phi^L]}(\te,\te).$$
In particular, this will hold if 
$$\gamma^t_{[\omega^R,\nu^L]}(\Phi(\ta),\Phi(\ta)) \leq \gamma^t_{[\omega_\Phi^R,\nu_\Phi^L]}(\ta,\ta)\;,\quad \forall \ta \in \tilde{\A}.$$

Using the Schwarz property of the map $\Phi$ we get that 
\begin{align*}
\Phi^\dagger\omega^R(\ta,\ta) 
    & = \omega^R(\Phi(\ta),\Phi(\ta)) \\ 
    & = \omega(\Phi(\ta)\Phi(\ta)^*)\\
    & \leq \omega(\Phi(\ta\ta^*))\\
    & = \omega_\Phi(\ta\ta^*) \\
    & = \omega_\Phi^R(\ta,\ta),
\end{align*}
and equivalently for $\nu^L$.

Now one can use Proposition~\ref{prop:2} and Proposition~\ref{prop:3} to get
$$\gamma^t_{[\omega^R,\nu^L]}(\Phi(\ta),\Phi(\ta)) = \Phi^\dagger\gamma^t_{[\omega^R,\nu^L]}(\ta,\ta) \leq \gamma^t_{[\Phid \omega^R, \Phid \nu^L]}(\ta,\ta) \leq \gamma^t_{[\omega_\Phi^R, \nu_\Phi^L]}(\ta,\ta),$$
which completes the proof.
\end{proof}

The conditions of the Theorem apply in the particular case that $\tilde{\A}\subset\A$ is a unital $*$-subalgebra and $\Phi:\tilde{\A}\to\A$ is the injection map. Since the injection is a $*$-homomorphism and unital, it is a completely positive and trace preserving map. This means that the restriction of the observables to a subalgebra decreases the relative entropy.

\section*{Acknowledgements}
{This work was partially supported by the “Ministerio de Ciencia e Innovación” Research Project PID2020-117477GB-I00, by the Madrid Government (Comunidad de Madrid-Spain) under the Multiannual Agreement with UC3M in the line of “Research Funds for Beatriz Galindo Fellowships” (C\&QIG-BG-CM-UC3M), and in the context of the V PRICIT (Regional Programme of Research and Technological Innovation), by the QUITEMAD Project P2018/TCS-4342 funded by Madrid Government (Comunidad de Madrid-Spain) and by the Severo Ochoa Programme for Centers of Excellence in R\&D” (CEX2019-000904-S).\\ email: jmppardo@math.uc3m.es}


\end{document}